\DeclareMathAlphabet{\can}{OT1}{cmss}{m}{n}
\newtheorem{thm}{Theorem}[section]
\newtheorem{rem}[thm]{Remark}
\newtheorem{exa}[thm]{Example}
\theoremstyle{definition}
\theoremstyle{fact}
\theoremstyle{conjecture}
\numberwithin{equation}{section}
\newcommand{\Tr}{\operatorname{Tr}}
\begin{document}
\title[ Few-weight codes  over $\Bbb F_p+u\Bbb F_p$  and their distance optimal Gray image]
{Few-weight codes  over $\Bbb F_p+u\Bbb F_p$ associated with down sets and their distance optimal Gray image}


\author[Y. Wu]{ Yansheng Wu}

\address{\rm Department of Mathematics, Nanjing University of Aeronautics and Astronautics, Nanjing,
Jiangsu, 211100, P. R. China; Department of Mathematics, Ewha Womans University, 
52, Ewhayeodae-gil, Seodaemun-gu, Seoul, 03760, South Korea}
\email{wysasd@163.com}

\author[J. Y. Hyun]{Jong Yoon Hyun}
 \address{\rm Konkuk University, Glocal Campus, 268 Chungwon-daero Chungju-si Chungcheongbuk-do 27478, South Korea}
\email{hyun33@kku.ac.kr}



\subjclass[2010]{11T71, 06A11}
\keywords{few-weight codes, codes over rings, down sets, }


\date{\today}


\baselineskip=20pt

\begin{abstract}   

Let $p$ be an odd  prime number. In this paper, we construct $2(2p-3)$ classes of codes over the ring $R=\Bbb F_p+u\Bbb F_p,u^2=0$, which are associated with down sets.  We compute the Lee weight distributions of the $2(2p-3)$ classes of codes when the down sets are generated by a single maximal element. Moreover, by using the Gray map of the linear codes over $R$, we find out $2(p-1)$ classes of $p$-ary distance optimal linear codes. Two of them meet the Griesmer bound.



\end{abstract}

\maketitle

\bigskip
\section{Introduction}
     
Let $p$ be a prime number and $\Bbb F_p$ the finite field of order $p$. 
An $[n, k, d]$  linear code $\mathcal{C}$ of length $n$ over $\Bbb F_p$  is a $k$-dimensional subspace of $\Bbb F_p^n$ with minimum  Hamming  distance $d$. The dual $\mathcal{C}^{\perp}$ of $\mathcal{C}$ is defined by $\{x\in\mathbb{F}^n_p:x\cdot c=0 \text{ for all }c\in\mathcal{C}\}$, where $x\cdot c=x_1c_1+\cdots+x_nc_n\in\mathbb{F}_p$.
An $ [n,k,d]$ code  $\mathcal{C}$ is called distance optimal if no $[n,k,d+1]$ code exists, see \cite[Chapter 2]{HP}. It is well-known \cite{G} that $n\geq\sum_{i=0}^{k-1}\left\lceil\frac{d}{p^i}\right\rceil$, called the Griesmer bound, for any $[n,k,d]$ linear code over $\mathbb{F}_p$. It follows that a linear code over $\mathbb{F}_p$ satisfying the Griesmer bound is distance optimal. We say that a linear code is optimal if it attains the Griesmer bound.

Denote by $A_i$ the number of codewords in $\mathcal C$
 with Hamming weight $i$. The weight enumerator of  
 $\mathcal C$ is defined by
 $1+A_1z+A_2z^2+\cdots+A_nz^n.$
The sequence $(1, A_1, A_2, \ldots, A_n)$ is called the weight distribution of 
 $\mathcal C$. A code $\mathcal{C}$ is $t$-weight if the number of nonzero $A_{i}$ in the sequence $(A_1, A_2, \ldots, A_n)$ is equal to $t$.

Let $\Bbb F_{q}$ be the finite field of order $q$, where $q$ is a power of a prime $p$. Let $D=\{d_{1}, d_{2}, \ldots, d_{n}\}\subseteq \Bbb F_{w}$, where $w$ is a power of $q$. A linear code of length $n$ over $\Bbb F_{q}$ is defined by
\begin{equation*}
\mathcal{C}_{D}= \{(\Tr_{w/q}(xd_{1}), \ldots, \Tr_{w/q}(xd_{n})) : x\in  \Bbb F_{w}\},
\end{equation*}
where $\Tr_{w/q}$ is the trace function from $\Bbb F_{w}$ to $\Bbb F_{q}$.
  This generic construction was first introduced by Ding {\em et al.} \cite{D1, DN}. Many known few-weight linear codes could be produced by selecting a proper defining set $D$, see \cite{HY1, KY,LYL, LM2,Y}.   
 
 The construction method by Ding {\em et al}. can be generalized as follows: 
 Let $R$ be a finite commutative ring, $R_m$ be  an extension of $R$ of degree $m$ and $R_m^*$ be the multiplicative group of units of $R_m$. A trace code over $R$ with a defining set $L=\{l_1,l_2,\ldots, l_n\} \subseteq R_m^*$ is defined by
 \begin{equation*}\mathcal {C}_L=\{\Tr(xl_1), \Tr(xl_2), \ldots, \Tr(xl_n):x\in R_m\},\end{equation*}
where $\Tr(\cdot)$ is a linear function from $R_m$ to $R$. 

To derive few-weight linear codes, we choose specific commutative rings and their extensions. Now  
let $R=\Bbb F_q+u\Bbb F_q, u^2=0$, and $\mathcal {R}=\Bbb F_w+u\Bbb F_w$. The Lee weight distribution of a trace code $\mathcal{C}_L$  have been investigated in some literature.

 (1) When $R=\Bbb F_p+u\Bbb F_p, u^2=0$, $\mathcal {R}=\Bbb F_{p^m}+u\Bbb F_{p^m}$ and $L=\mathcal{Q}+u\Bbb F_{p^m}$, where $\mathcal{Q}$ is  the set of all square elements of $\Bbb F_{p^m}^*$,  the code $C_L$ is a two-weight or three-weight code, see \cite{S2}.
 
 (2) When $R=\Bbb F_p+u\Bbb F_p, u^2=u$, $\mathcal {R}=\Bbb F_{p^m}+u\Bbb F_{p^m}$ and $L=\mathcal{Q}+u\Bbb F_{p^m}^*$ or $\Bbb F_{p^m}^*+u\Bbb F_{p^m}^*$, where $\mathcal{Q}$ is  the set of all square elements of $\Bbb F_{p^m}^*$,  the code $C_L$ is a two-weight or few-weight code, see \cite{S3}.
 
 (3) When $R=\Bbb F_q+u\Bbb F_q, u^2=0$, $\mathcal {R}=\Bbb F_{q^m}+u\Bbb F_{q^m}$ and $L=C_0^{(e,q^m)}+u\Bbb F_{p^m}$, where $e$ is a divisor of $q-1$ and  $C_0^{(e,q^m)}$ is  the cyclotomic class of order $e$,  the code $C_L$ is a two-weight or few-weight code, see \cite{LM}.


In this paper, we study the following linear codes defined in $(1.1)$ right below.
Let $L$ be a subset of $\Bbb F_p^m+u\Bbb F_p^m,u^2=0$. A code $\mathcal {C}_{L}$ over $\Bbb F_p+u\Bbb F_p$ is defined by 
 \begin{equation}
 \mathcal {C}_{L}=\{c_{L}(\mathbf{a})=(\langle \mathbf{a}, \mathbf{l}\rangle )_{\mathbf{l}\in L}:\mathbf{a}\in \Bbb F_p^m+u \Bbb F_p^m\},
 \end{equation}
where $\langle \cdot, \cdot\rangle$ is the standard inner product on $ \Bbb F_p^m+u \Bbb F_p^m$.  Notice that if $\mathbf{x}=\mathbf{a}+u\mathbf{b}$ and $\mathbf{y}=\mathbf{c}+u\mathbf{d}$
 for $\mathbf{a},\mathbf{b},\mathbf{c},\mathbf{d}\in\mathbb{F}^m_p$, then 
 $\langle \mathbf{x},\mathbf{y} \rangle=\mathbf{a}\cdot\mathbf{c}+u(\mathbf{a}\cdot\mathbf{d}+\mathbf{b}\cdot\mathbf{c})$.

\begin{exa}
{\rm
Let $L=\{(1,0),(1,0)+(1,0)u,(1,0)+(0,1)u,(1,0)+(1,1)u\}$ be a subset of $\Bbb F_2^2+u\Bbb F_2^2,u^2=0$. Then
\[
\mathcal{C}_L=\{(a_1+b_1u,a_1+(a_1+b_1)u,a_1+(a_2+b_1)u,a_1+(a_1+a_2+b_1)u):
a_i,b_i\in\mathbb{F}_2,i=1,2\}.
\]
}
\end{exa}


One of the important problems in coding theory is to find the  $[n, k,d]$ linear codes over $\mathbb{F}_p$ having the largest minimum distance for given $n$ and $k$. In \cite{HKN}, the authors constructed some infinite families of distance optimal linear codes over $\mathbb{F}_p$ from down sets. 
The aim of this paper is to construct the few-weight codes $\mathcal{C}_L$ over $R=\mathbb{F}_p+u\mathbb{F}_p, u^2=0$ and find out the distance optimal linear codes over $\mathbb{F}_p$ from the Gray image of $\mathcal{C}_L$, where $L$'s are subsets of $R^m$ associated with  down sets of $\mathbb{F}^m_p$ generated by a single maximal element.


The rest of this paper is organized as follows. Section 2  recalls basic concepts and introduce some known results. In Sections 3, we determine the Lee weight distribution of $2(2p-3)$ classes of codes (Theorems 3.1-3.4). In Section 4, by using the Gray map, we obtain $2(p-1)$ classes of distance optimal linear codes (Theorems 4.1, 4.2), from which we obtain Table 5 of distance optimal linear codes with small dimensions. Two of them meet the Griesmer bound. In Section 5, we conclude the paper.

\section{Preliminaries}

Firstly, the Lee weight defined on $R^m=(\Bbb F_p+u\Bbb F_p)^m, u^2=0$
and the Gray map from $R^m$ to $\mathbb{F}^{2m}_p$ are introduced.
Next, we define a down set of $\mathbb{F}^m_p$ by endowing a partial order on $\mathbb{F}^m_p$.

In the remainder of this paper,  we always assume that  $R=\Bbb F_p+u\Bbb F_p$, where $u^2=0$.

\subsection{Lee weight and Gray map}$~$
 
 By a code of length $m$ over $R$, we mean a subset of $R^m$. 
  A linear code of $\mathcal{C} $ of length $m$ over $R$ is an $R$-submodule of $R^m$. The inner product between $\mathbf{x}=(x_1,x_2,\ldots, x_m)$ and $\mathbf{y}=(y_1,y_2, \ldots, y_m)\in R^m$ is defined by $\langle \mathbf{x},\mathbf{y} \rangle=\sum_{i=1}^mx_iy_i\in R$. Notice that if $\mathbf{x}=\mathbf{a}+u\mathbf{b}$ and $\mathbf{y}=\mathbf{c}+u\mathbf{d}$
 for $\mathbf{a},\mathbf{b},\mathbf{c},\mathbf{d}\in\mathbb{F}^m_p$, then 
 $\langle \mathbf{x},\mathbf{y} \rangle=\mathbf{a}\cdot\mathbf{c}+u(\mathbf{a}\cdot\mathbf{d}+\mathbf{b}\cdot\mathbf{c})$.
 
The Gray map $\hat{\phi}$ from $R$ to $\Bbb F_p^2$ is defined by 
$$\hat{\phi}: R\to \Bbb F_p^2, a+ub\mapsto (b,a+b),~a,b\in \Bbb F_p$$
This leads to the Gray map $\phi$ naturally from $R^m$ to $\Bbb F_p^{2m}$ as follows:
 $$\phi: R^m\to \Bbb F_p^{2m},~\mathbf{x}=\mathbf{a}+u\mathbf{b}\mapsto (\mathbf{b}, \mathbf{a}+\mathbf{b}).$$
 
The Hamming weight of a vector $\mathbf{a}$ of length $m$ over $\Bbb F_p$ is defined to be the number of nonzero entries in the vector $\mathbf{a}$. The Lee weight of a vector $\mathbf{x}=\mathbf{a}+\mathbf{b}u$ of length $m$ over $R$ is defined to be the Hamming weight of its Gray image as follows:
$$w_L(\mathbf{x})=w_L(\mathbf{a}+u\mathbf{b})=w_H(\mathbf{b})+w_H(\mathbf{a}+\mathbf{b}).$$
The Lee distance $d_L(\mathbf{x},\mathbf{y})$ of between two vectors $\mathbf{x,y}\in R^m$ is defined as $w_L(\mathbf{x-y})$. It is easy to check that the Gray map $\phi$ is an isometry from $(R^m, d_L)$ to $(\Bbb F_p^{2m}, d_H)$, where $d_H$ denotes the Hamming distance. Obviously, if $\mathcal{C}$ is a $\mathbb{F}_p$-submodule of $R^m$ with parameters $(n,p^k,d)$, then $\phi(\mathcal{C})$ is a linear code over $\mathbb{F}_p$ with parameters $[2n,k,d]$.

\begin{exa}
{\rm
We continue Example 1.1 to illustrate the Gray image of $\mathcal{C}_L$.
\begin{align*}
 \phi(\mathcal{C}_L)=\{(b_1,a_1+b_1,a_1+b_1,b_1,a_2+b_1,a_1+a_2+b_1,\\
 a_1+a_2+b_1,a_2+b_2):a_i,b_i\in\mathbb{F}_2,i=1,2\}.   
\end{align*}
We point out that the minimum distance of $\phi(\mathcal{C}_L)^{\perp}$ is two.
}
\end{exa}
\subsection{Down sets}$~$


 Let $v=(v_1, \ldots, v_m)$ and $w=(w_1, \ldots, w_m)$ be two vectors in $\Bbb F_p^m$. We define a partial order on $\Bbb F_p^m$ as follows:  $v\preceq w$ if and only if $v_i\le w_i$ for all $i\in [m]=\{1,\ldots, m\}$. We say that a subset $\Delta$ of $\Bbb F_p^m$ is a down set if $w\in \Delta$ and $v\preceq w$ imply $v\in \Delta$.  Then $(\Bbb F_p^m, \preceq)$ forms a complete lattice, where the join and the meet of two  vectors $v$ and $w$ in $\Bbb F_p^m$ are  respectively defined by $v \lor w=(\max\{v_1, w_1\}, \ldots,\max\{v_m, w_m\} )$
 and $v\land w=(\min\{v_1, w_1\}, \ldots,\min\{v_m, w_m\} )$. 
An element $v\in \Delta$ is maximal if $v\preceq w$ and $w\in \Delta $ imply $v=w$. It is readily verified that every down sets $\Delta $ of $\Bbb F_p^m$ is generated by the set of maximal elements of $\Delta$, i.e., $\Delta=\langle v(1), \ldots, v(t)\rangle$, where $\{v(1), \ldots, v(t)\}$ is the set of maximal element of $\Delta$.


\section{The Lee weight distributions }
Hereafter, we assume that $p$ is an odd prime number.
Let $\Delta$ be a down set of $\Bbb F_p^m$ and $ L=\Delta^c+u\Bbb F_p^m$, where $\Delta^c=\Bbb F_p^m\backslash \Delta$.  
Recall from $(1.1)$ that
 \begin{equation*}
 \mathcal {C}_{L}=\{c_{L}(\mathbf{y})=(\langle \mathbf{y}, \mathbf{l}\rangle )_{\mathbf{l}\in L}:\mathbf{y}\in \Bbb F_p^m+u \Bbb F_p^m\}\\
 =\{(\mathbf{a}\cdot\mathbf{c}+u(\mathbf{a}\cdot\mathbf{d}+\mathbf{b}\cdot\mathbf{c}))_{\mathbf{c}\in\Delta^c,\mathbf{d}\in\mathbb{F}^m_p}:\mathbf{a},
 \mathbf{b}\in\mathbb{F}^m_p\}.
 \end{equation*}
  The length of the code $\mathcal{C}_{L}$ is $|L|$. 
  Notice that $\mathcal{C}_L$ is not linear but $\mathbb{F}_p$-submodule of $R^m$.
 The Gray image $\phi(\mathcal{C}_L)$ of $\mathcal{C}_L$ is
  \[
  \phi(\mathcal{C}_L)=\{(\mathbf{a}\cdot\mathbf{d}+\mathbf{b}\cdot\mathbf{c},\mathbf{a}\cdot\mathbf{d}+\mathbf{b}\cdot\mathbf{c}+\mathbf{a}\cdot\mathbf{c})_{\mathbf{c}\in\Delta^c,\mathbf{d}\in\mathbb{F}^m_p}:\mathbf{a},
 \mathbf{b}\in\mathbb{F}^m_p\}.
  \]








Assume that $\mathbf{a}={}\alpha+u{\beta}$, $\mathbf{l}_1={t_1}+u{y}$, and $\mathbf{l}_2={t_2}+u{y}$, where  ${\alpha}=(\alpha_1, \ldots, \alpha_m),$ ${\beta}=(\beta_1, \ldots, \beta_m),$ ${y}=(y_1, \ldots, y_m)$ $\in \Bbb F_p^m$, ${t_1}\in \Delta$, and ${t_2}\in \Delta^c$ without expressing in the bold face. If $\mathbf{a}=\mathbf{0}$, then $w_L(c_{L}(\mathbf{a}))=w_L(c_{L^c}(\mathbf{a}))=0$. Next we  assume that $\mathbf{a}\neq \mathbf{0}$. Then the Lee weight of the codeword $c_{L^c}(\mathbf{a})$ of $\mathcal{C}_{L^c}$ becomes that 
\begin{eqnarray}
&&w_L(c_{L^c}(\mathbf{a}))
=w_L(({\alpha}\cdot{t_1}+u({\alpha}\cdot{y}+{\beta}\cdot{t_1}))_{{t_1}\in \Delta, {y}\in \Bbb F_p^m})\nonumber\\
&=&w_H(({\alpha}\cdot{y}+{\alpha}\cdot{t_1})_{{t_1}\in \Delta, {y}\in \Bbb F_p^m})+w_H((({\alpha}+{\beta})\cdot{t_1}+\alpha\cdot{y})_{{t_1}\in \Delta, {y}\in \Bbb F_p^m})\nonumber\\
&=&2|L^c|-\frac1p\sum_{x\in\Bbb F_p}\sum_{t_1\in \Delta}\sum_{{y}\in \Bbb F_p^m}\zeta_p^{(\alpha\cdot{y}+\beta\cdot{t_1})x}-\frac1p\sum_{x\in\Bbb F_p}\sum_{t_1\in \Delta}\sum_{{y}\in \Bbb F_p^m}\zeta_p^{((\alpha+\beta)\cdot{t_1}+\alpha\cdot{y})x}\nonumber\\
&=&2|L^c|(1-\frac 1p)-\frac1p\sum_{x\in \Bbb F_p^*}\sum_{t_1\in \Delta}\zeta_p^{\beta\cdot{t_1}x}\sum_{{y}\in \Bbb F_p^m}\zeta_p^{\alpha\cdot{xy}}
-\frac1p \sum_{x\in \Bbb F_p^*}\sum_{t_1\in \Delta}\zeta_p^{(\alpha+\beta)\cdot{t_1}x}\sum_{{y}\in \Bbb F_p^m}\zeta_p^{\alpha\cdot{xy}}\nonumber\\
&=&2|L^c|(1-\frac 1p)-p^{m-1}\delta_{0,\alpha}(\sum_{x\in \Bbb F_p^*}\sum_{t_1\in \Delta}\zeta_p^{\beta\cdot{t_1}x}
+ \sum_{x\in \Bbb F_p^*}\sum_{t_1\in \Delta}\zeta_p^{(\alpha+\beta)\cdot{t_1}x}),
\end{eqnarray}
where $\delta$ is the Kronecker delta function.

Similarly, the Lee weight of the codeword $c_{L}(\mathbf{a})$ of $\mathcal{C}_{L}$ becomes that \begin{eqnarray}
&&w_L(c_{L}(\mathbf{a}))
=w_H((\alpha\cdot{y}+\beta\cdot{t_2})_{{t_2}\in \Delta^c, {y}\in \Bbb F_p^m})+w_H(((\alpha+\beta)\cdot{t_2}+\alpha\cdot{y})_{{t_2}\in \Delta^c, {y}\in \Bbb F_p^m})\nonumber\\
&=&2|L|(1-\frac 1p)-\frac1p\sum_{x\in \Bbb F_p^*}\sum_{t_2\in \Delta^c}\zeta_p^{\beta\cdot{t_2}x}\sum_{{y}\in \Bbb F_p^m}\zeta_p^{\alpha\cdot{xy}}
-\frac1p \sum_{x\in \Bbb F_p^*}\sum_{t_2\in \Delta}\zeta_p^{(\alpha+\beta)\cdot{t_2}x}\sum_{{y}\in \Bbb F_p^m}\zeta_p^{\alpha\cdot{xy}}\nonumber\\
&=&2|L|(1-\frac 1p)-p^{m-1}\delta_{0,\alpha}(\sum_{x\in \Bbb F_p^*}\sum_{t_2\in \Delta^c}\zeta_p^{\beta\cdot{t_2}x}
+ \sum_{x\in \Bbb F_p^*}\sum_{t_2\in \Delta^c}\zeta_p^{(\alpha+\beta)\cdot{t_2}x}).
\end{eqnarray}

Note that $|L^c|+|L|=p^{2m}$. Then \begin{equation}w_L(c_{L^c}(\mathbf{a}))+w_L(c_{L}(\mathbf{a}))=2p^{2m-1}(p-1)-p^{2m-1}(p-1)\delta_{0,\alpha}(\delta_{0,\beta}+\delta_{0,\alpha+\beta}).\end{equation}

By using $(3.1)$, $(3.2)$ and $(3.3)$, we give the Lee weight distribution of the code  $\mathcal{C}_{L}$  in the case that the down set  is generated by a single maximal element.

We begin with down sets of the simplest forms to determine their Lee weight distributions. 

\begin{thm} Let $m\ge 2$ be an integer and $p$ be an odd prime number. Let $\Delta=\langle (r, 0, \ldots, 0) \rangle$ be a down set of $\Bbb F_p$ for $r=1, \ldots, p-1$.  Then the code  $\mathcal{C}_{L}$ has length $p^{m}(p^m-r-1)$, size $p^{2m}$, and its Lee weight distribution is given by Table 1. 
\end{thm}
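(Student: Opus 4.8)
The plan is to specialize the general weight formula $(3.2)$ to the explicit down set $\Delta=\langle(r,0,\ldots,0)\rangle$ and read off the distribution by a short case analysis. First I would pin down $\Delta$: since $v\preceq(r,0,\ldots,0)$ forces $v_2=\cdots=v_m=0$ and $0\le v_1\le r$, we have $\Delta=\{(j,0,\ldots,0):0\le j\le r\}$, so $|\Delta|=r+1$ and $|\Delta^c|=p^m-r-1$. Consequently $|L|=p^m|\Delta^c|=p^m(p^m-r-1)$, giving the stated length. To get the size $p^{2m}$ I would verify that $\mathbf{a}\mapsto c_L(\mathbf{a})$ is injective: if $\langle\mathbf{a},\mathbf{l}\rangle=0$ for every $\mathbf{l}=t_2+uy\in L$, then comparing the $\mathbb{F}_p$- and $u$-parts and letting $y$ range shows first $\alpha=0$ and then $\beta\cdot t_2=0$ for all $t_2\in\Delta^c$; since $m\ge 2$, the set $\Delta^c$ contains $\mathbf{e}_2,\ldots,\mathbf{e}_m$ together with a vector of nonzero first coordinate (namely $(r+1,0,\ldots,0)$ if $r\le p-2$, or $(1,1,0,\ldots,0)$ if $r=p-1$), hence spans $\mathbb{F}_p^m$ and forces $\beta=0$.

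The weight computation is governed entirely by the factor $\delta_{0,\alpha}$ in $(3.2)$. When $\alpha\ne 0$ the correction term vanishes and every such codeword carries the single weight $2|L|(1-\frac1p)=2p^{m-1}(p-1)(p^m-r-1)$; there are $(p^m-1)p^m$ vectors $\mathbf{a}$ with $\alpha\ne0$. When $\alpha=0$ the two exponential sums in $(3.2)$ coincide, so the weight equals $2|L|(1-\frac1p)-2p^{m-1}S(\beta)$, where $S(\beta)=\sum_{x\in\mathbb{F}_p^*}\sum_{t_2\in\Delta^c}\zeta_p^{\beta\cdot t_2 x}$.

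To evaluate $S(\beta)$ I would complement, writing $\sum_{t_2\in\Delta^c}=\sum_{t\in\mathbb{F}_p^m}-\sum_{t\in\Delta}$; for $\beta\ne0$ the full sum over $\mathbb{F}_p^m$ vanishes, so $S(\beta)=-\sum_{x\in\mathbb{F}_p^*}\sum_{j=0}^r\zeta_p^{\beta_1 jx}$ because $\beta\cdot(j,0,\ldots,0)=\beta_1 j$. The outcome depends only on $\beta_1$. If $\beta_1=0$ but $\beta\ne0$, the inner sum is $r+1$ and $S(\beta)=-(p-1)(r+1)$, which yields the weight $2p^{2m-1}(p-1)$ for each of the $p^{m-1}-1$ such $\beta$. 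If $\beta_1\ne0$, I would interchange the $x$- and $j$-summations and use $\sum_{x\in\mathbb{F}_p^*}\zeta_p^{cx}=p-1$ for $c\equiv0$ and $-1$ otherwise; since $\beta_1 j\not\equiv0$ for $1\le j\le r$, this gives $\sum_{x}\sum_{j=0}^r\zeta_p^{\beta_1 jx}=(p-1)-r$, so $S(\beta)=r+1-p$ and the weight simplifies to $2p^m\big((p-1)p^{m-1}-r\big)$, occurring for the $(p-1)p^{m-1}$ vectors with $\beta_1\ne0$.

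The four classes $\mathbf{a}=\mathbf{0}$, $\alpha\ne0$, $\alpha=0$ with $\beta_1=0\ne\beta$, and $\alpha=0$ with $\beta_1\ne0$ therefore produce one zero weight and the three nonzero weights above, with the frequencies just recorded; a final check that $1+(p^m-1)p^m+(p^{m-1}-1)+(p-1)p^{m-1}=p^{2m}$ confirms the accounting and matches Table 1. I expect the only genuinely delicate point to be the character sum for $\beta_1\ne0$: the decisive move is to sum over $x$ before $j$, so that each inner sum collapses to the standard value $\pm1$, rather than summing the geometric series $\sum_j\zeta_p^{\beta_1 jx}$ over $x$. Everything else—the complementation, the case split on $\alpha$ and $\beta_1$, and the frequency bookkeeping—is routine.
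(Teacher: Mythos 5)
Your proof is correct and follows essentially the same route as the paper: the same case split ($\alpha\neq 0$; then $\alpha=0$ with $\beta_1=0$ or $\beta_1\neq 0$) and the same reduction of the sum over $\Delta^c$ to the geometric character sum over $\Delta=\{(j,0,\ldots,0):0\le j\le r\}$ --- the paper performs this complementation at the level of Lee weights via Eq.~(3.3), while you perform it inside the character sum in Eq.~(3.2), which is equivalent. Your explicit injectivity argument for the size $p^{2m}$ and the final frequency check are details the paper leaves implicit, but they confirm rather than alter the argument.
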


\begin{table}[h]
\caption{Lee weight distribution of the code $\mathcal{C}_{L}$  in Theorem 3.1}   
\begin{tabu} to 0.6\textwidth{X[1.2,c]|X[1,c]}  
\hline 
\rm{Lee Weight}&\rm{Frequency}\\ 
\hline
$0$&$1$\\ 
\hline
$2p^{2m-1}(p-1)$& $p^{m-1}-1$ \\ 
\hline
$2p^{m}(p^m-p^{m-1}-r)$& $p^{m-1}(p-1)$ \\ 
\hline
$2p^{m-1}(p-1)(p^m-r-1)$& $p^m(p^m-1)$ \\ 
\hline
\end{tabu}  
\end{table}

\begin{proof}
It is easy to check that the length  of the code $\mathcal{C}_{L}$ is  $|L|=p^{m}(p^m-r-1)$. Let $\mathbf{a}=\alpha+u\beta$ for $\alpha=(\alpha_1,\ldots,\alpha_m)$ and $\beta=(\beta_1,\ldots,\beta_m)\in\mathbb{F}^m_p$.
If $\alpha\neq 0$, then   $c_{L}(\mathbf{a})=2|L|(1-\frac 1p)=2p^{m-1}(p-1)(p^m-r-1)$.
If $\alpha=0$, then by Eq. (3.1), we have \begin{eqnarray*}c_{L^c}(\mathbf{a})&=&2p^{m-1}(p-1)(r+1)-2p^{m-1}\sum_{x\in \Bbb F_p^*}\sum_{t_1\in \Delta}\zeta_p^{\beta\cdot{t_1}x}\\
&=&2p^{m-1}(p-1)(r+1)-2p^{m-1}\sum_{x\in \Bbb F_p^*}(1+\zeta^{\beta_1x}+\cdots+\zeta^{\beta_1xr})\\
&=&\left\{
   \begin{array}{ll}
   0, &\mbox{ if $ \beta_1=0$},\\
   2p^mr, &\mbox{ if $\beta_1\neq 0$}.\\
    \end{array}\right.
\end{eqnarray*}
By Eq. (3.3), we have \begin{eqnarray*}c_{L}(\mathbf{a})&=& 2p^{2m-1}(p-1)-c_{L^c}(\mathbf{a})\\
&=&\left\{
   \begin{array}{ll}
   2p^{2m-1}(p-1), &\mbox{ if $ \beta_1=0$},\\
   2p^m(p^m-p^{m-1}-r), &\mbox{ if $\beta_1\neq 0$}.\\
    \end{array}\right.
\end{eqnarray*}
The frequency of each codeword of the codes should be computed by the vector $\mathbf{a}$. \end{proof}



\begin{thm} Let $m\ge 3$ be an integer and $p$ be an odd prime number. Let $\Delta=\langle (p-1, r, 0,\ldots, 0) \rangle$ be a down set of $\Bbb F_p$ for $r=1, \ldots, p-1$.  Then the code  $\mathcal{C}_{L}$ has length $p^{m}(p^m-p(r+1))$, size $p^{2m}$, and its Lee weight distribution is given by Table 2.



\end{thm}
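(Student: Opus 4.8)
The plan is to run exactly the template of the proof of Theorem 3.1, feeding the two-dimensional down set into the character-sum identities (3.1), (3.2), (3.3). First I would record the combinatorial data of $\Delta=\langle (p-1,r,0,\ldots,0)\rangle$: a vector $(v_1,\ldots,v_m)$ lies in $\Delta$ exactly when $0\le v_1\le p-1$, $0\le v_2\le r$, and $v_i=0$ for $i\ge 3$. Hence $|\Delta|=p(r+1)$, $|\Delta^c|=p^m-p(r+1)$, and the length is $|L|=p^m(p^m-p(r+1))$, matching the claim; also $|L^c|=p^{m+1}(r+1)$. From (3.2) the weight is constant, $2|L|(1-\tfrac1p)=2p^m(p-1)(p^{m-1}-r-1)$, for every $\mathbf{a}=\alpha+u\beta$ with $\alpha\neq 0$. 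Since the unique weight-zero codeword corresponds to $\mathbf a=\mathbf 0$ (to be verified at the end), the $\mathbb F_p$-linear map $\mathbf a\mapsto c_L(\mathbf a)$ is injective, giving size $p^{2m}$.

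The heart of the argument is the case $\alpha=0$, where (3.1) collapses (as $\alpha+\beta=\beta$) to
\[
w_L(c_{L^c}(\mathbf a))=2|L^c|\Bigl(1-\tfrac1p\Bigr)-2p^{m-1}S,\qquad S=\sum_{x\in\mathbb F_p^*}\sum_{t_1\in\Delta}\zeta_p^{\beta\cdot t_1 x}.
\]
Because $t_1=(v_1,v_2,0,\ldots,0)$, only $\beta_1,\beta_2$ are paired with nonzero coordinates, so $S=\sum_{x\in\mathbb F_p^*}\bigl(\sum_{v_1=0}^{p-1}\zeta_p^{\beta_1 v_1 x}\bigr)\bigl(\sum_{v_2=0}^{r}\zeta_p^{\beta_2 v_2 x}\bigr)$. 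The first factor equals $p$ when $\beta_1=0$ and $0$ otherwise, so the whole computation decouples on $\beta_1$. When $\beta_1\ne 0$ we get $S=0$; when $\beta_1=0$ the remaining sum over $v_2$ is precisely the geometric sum already evaluated in Theorem 3.1, giving $S=p(p-1)(r+1)$ if $\beta_2=0$ and $S=p(p-1-r)$ if $\beta_2\ne 0$. Substituting yields $w_L(c_{L^c}(\mathbf a))=0$, $2p^{m+1}r$, or $2p^m(p-1)(r+1)$ in the three cases $\beta_1=\beta_2=0$, $(\beta_1=0,\beta_2\ne 0)$, and $\beta_1\ne 0$.

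Finally I would convert these to weights of $c_L(\mathbf a)$ through the complementarity relation (3.3), which for $\alpha=0$ reads $w_L(c_{L^c}(\mathbf a))+w_L(c_L(\mathbf a))=2p^{2m-1}(p-1)(1-\delta_{0,\beta})$. This produces the three nonzero weights $2p^{2m-1}(p-1)$ (from $\beta_1=\beta_2=0,\ \beta\ne 0$), $2p^{2m-1}(p-1)-2p^{m+1}r$ (from $\beta_1=0,\beta_2\ne 0$), and $2p^m(p-1)(p^{m-1}-r-1)$ (from $\beta_1\ne 0$). The frequencies follow by counting pairs $(\alpha,\beta)$: namely $1$ for the zero word, $p^{m-2}-1$ for $\alpha=0,\beta_1=\beta_2=0,\beta\ne0$, then $p^{m-2}(p-1)$ for $\alpha=0,\beta_1=0,\beta_2\ne0$, and the remaining $p^{2m}-p^{m-1}$ pairs for the last weight; checking $1+(p^{m-2}-1)+p^{m-2}(p-1)+(p^{2m}-p^{m-1})=p^{2m}$ confirms the bookkeeping and the stated size. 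These four weight/frequency pairs are the content to be tabulated as Table 2.

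The step I expect to be most delicate is the weight-merging in the last stratum: the value $2p^m(p-1)(p^{m-1}-r-1)$ arising from $\alpha=0,\beta_1\ne 0$ coincides with the constant weight $2|L|(1-\tfrac1p)$ from $\alpha\ne 0$, precisely because the maximal element saturates the first coordinate ($v_1$ running over all of $\mathbb F_p$). Recognizing this coincidence is what collapses four apparent weight values into the three listed, and it also explains the hypothesis $m\ge 3$: only then is the stratum $\beta_1=\beta_2=0,\ \beta\ne 0$ nonempty, supplying the weight $2p^{2m-1}(p-1)$ with frequency $p^{m-2}-1$. One further boundary check is worthwhile: at $r=p-1$ the second and fourth weights happen to coincide, so for that extremal $r$ the code has one fewer distinct weight, a degeneracy the formulaic table still captures correctly.
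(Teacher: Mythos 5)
Your proof is correct and follows essentially the same route as the paper: compute $|L|$, evaluate the character sum in Eq.\ (3.1) for $\alpha=0$ by splitting into the cases $\beta_1=\beta_2=0$, $\beta_1=0\ne\beta_2$, $\beta_1\ne 0$, then recover $w_L(c_L(\mathbf a))$ via Eq.\ (3.3) and count frequencies. Your write-up is in fact slightly more complete than the paper's: you make explicit the frequency bookkeeping and the merging of the $\alpha\ne 0$ weight with the $\alpha=0,\beta_1\ne 0$ weight (the paper leaves this implicit, and its displayed middle value $2p^m(p^m-p^{m-1}-r)$ contains a typo for $2p^{m+1}(p^{m-1}-p^{m-2}-r)$, which your computation gets right, in agreement with Table 2).
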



\begin{table}[h]
\caption{Lee weight distribution of the code $\mathcal{C}_{L}$  in Theorem 3.2}   
\begin{tabu} to 0.7\textwidth{X[2,c]|X[1,c]}  
\hline 
\rm{Lee Weight}&\rm{Frequency}\\ 
\hline
$0$&$1$\\ 
\hline
$2p^{2m-1}(p-1)$&$p^{m-2}-1$\\
\hline
$2p^{m+1}(p^{m-1}-p^{m-2}-r)$& $p^{m-2}(p-1)$ \\ 
\hline
$2p^{m}(p-1)(p^{m-1}-r-1)$& $p^{2m}-p^{m-1}$ \\ 
\hline
\end{tabu}  
\end{table}

\begin{proof} It is easy to check that the length  of the code $\mathcal{C}_{L}$ is  $|L|=p^{m}(p^m-p(r+1))$. Let $\mathbf{a}=\alpha+u\beta$ for $\alpha=(\alpha_1,\ldots,\alpha_m)$ and $\beta=(\beta_1,\ldots,\beta_m)\in\mathbb{F}^m_p$. If $\alpha\neq 0$, then  $c_{L}(\mathbf{a})=2|L|(1-\frac 1p)=2p^{m-1}(p-1)(p^m-p(r+1))$.
If $\alpha=0$, then by Eq. (3.1), we have \begin{eqnarray*}c_{L^c}(\mathbf{a})&=&2p^{m}(p-1)(r+1)-2p^{m-1}\sum_{x\in \Bbb F_p^*}\sum_{t_1\in \Delta}\zeta_p^{\beta\cdot{t_1}x}\\
&=&2p^{m}(p-1)(r+1)-2p^{m-1}\sum_{x\in \Bbb F_p^*}\sum_{y\in \Bbb F_p}\zeta_p^{\beta_1xy}\sum_{z=0}^{r}\zeta_p^{\beta_2xz}\\
&=&\left\{
   \begin{array}{llll}
   0, &\mbox{ if $\beta_1=\beta_2= 0$},\\
   2p^{m+1}r, &\mbox{ if $\beta_1=0$}  \mbox{ and } \mbox{ $\beta_2\neq0$},\\
   2p^{m}(p-1)(r+1), &\mbox{ if $\beta_1\neq 0$} .
    \end{array}\right.
\end{eqnarray*} 
By Eq. (3.3), we have 
\begin{eqnarray*}c_{L}(\mathbf{a})&=& 2p^{2m-1}(p-1)-c_{L_1}(\mathbf{a})\\
&=&\left\{
   \begin{array}{llll}
   2p^{2m-1}(p-1), &\mbox{ if $\beta_1=\beta_2= 0$},\\
   2p^{m}(p^m-p^{m-1}-r), &\mbox{ if $\beta_1=0$}  \mbox{ and } \mbox{ $\beta_2\neq0$},\\
   2p^{m}(p-1)(p^{m-1}-r-1), &\mbox{ if $\beta_1\neq 0$} .
    \end{array}\right.
\end{eqnarray*}
The frequency of each codeword of the codes should be computed by the vector $\mathbf{a}$. 
\end{proof}

\begin{thm} Let $m\ge 3$ be an integer and $p$ be an odd prime number. Let $\Delta=\langle (p-2, r, 0,\ldots, 0) \rangle$ be a down set of $\Bbb F_p$ for $r=1, \ldots, p-2$.  Then the code  $\mathcal{C}_{L}$ has length $p^{m}(p^m-(p-1)(r+1))$, size $p^{2m}$, and its Lee weight distribution is given by  Table 3.



\end{thm}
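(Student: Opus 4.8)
The plan is to follow the template already used for Theorems 3.1 and 3.2, specializing the master formulas (3.1)--(3.3) to $\Delta=\langle(p-2,r,0,\ldots,0)\rangle$. First I would record that $\Delta$ consists exactly of the vectors $(i,j,0,\ldots,0)$ with $0\le i\le p-2$ and $0\le j\le r$, so that $|\Delta|=(p-1)(r+1)$ and $|L|=|\Delta^c|\,p^m=p^m\bigl(p^m-(p-1)(r+1)\bigr)$, which is the claimed length. The case $\alpha\ne 0$ is then immediate from (3.2): since $\delta_{0,\alpha}=0$, every such codeword has Lee weight $2|L|(1-\tfrac1p)=2p^{m-1}(p-1)\bigl(p^m-(p-1)(r+1)\bigr)$, with frequency $(p^m-1)p^m$.

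The real work is the case $\alpha=0$, where $\alpha+\beta=\beta$ and (3.1) collapses to $c_{L^c}(\mathbf a)=2p^{m-1}(p-1)^2(r+1)-2p^{m-1}\Sigma$ with $\Sigma=\sum_{x\in\mathbb{F}_p^*}\sum_{t_1\in\Delta}\zeta_p^{\beta\cdot t_1 x}$, after which (3.3) returns $c_L(\mathbf a)=2p^{2m-1}(p-1)-c_{L^c}(\mathbf a)$ for $\beta\ne 0$ and $c_L(\mathbf a)=0$ for $\beta=0$. Because only the first two coordinates of each $t_1$ are active, I would factor $\Sigma=\sum_{x\in\mathbb{F}_p^*}\bigl(\sum_{i=0}^{p-2}\zeta_p^{\beta_1 ix}\bigr)\bigl(\sum_{j=0}^{r}\zeta_p^{\beta_2 jx}\bigr)$ and evaluate it according to whether $\beta_1,\beta_2$ vanish. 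When $\beta_1=0$ the first inner sum equals $p-1$ and the analysis matches Theorem 3.2: one gets $c_{L^c}(\mathbf a)=0$ if $\beta_2=0$, and $c_{L^c}(\mathbf a)=2p^m(p-1)r$ if $\beta_2\ne 0$.

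The step I expect to be the crux is the subcase $\beta_1\ne 0$, which is where this down set departs from Theorem 3.2. Since the first coordinate now runs only up to $p-2$ rather than over all of $\mathbb{F}_p$, the inner sum no longer vanishes; using $\sum_{i=0}^{p-1}\zeta_p^{\beta_1 ix}=0$ it equals $-\zeta_p^{-\beta_1 x}$. Substituting and interchanging the order of summation yields $\Sigma=(r+1)-p\cdot\#\{\,j\in\{0,\ldots,r\}:\beta_2 j=\beta_1\text{ in }\mathbb{F}_p\,\}$. As $\beta_1\ne 0$, this count equals $1$ exactly when $\beta_2\ne 0$ and $\beta_1\beta_2^{-1}\in\{1,\ldots,r\}$, and equals $0$ otherwise. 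This produces a genuine new bifurcation: $\Sigma=r+1$, hence $c_{L^c}(\mathbf a)=2p^m(p-2)(r+1)$, off this ratio locus, and $\Sigma=r+1-p$, hence $c_{L^c}(\mathbf a)=2p^m\bigl[(r+1)(p-2)+1\bigr]$, on it. Thus the single $\beta_1\ne0$ weight of Theorem 3.2 splits into two here; moreover, unlike in Theorem 3.2, the $\alpha\ne0$ weight no longer coincides with any $\alpha=0$ weight, so $\mathcal C_L$ is generically a five-weight code rather than a three-weight one.

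Finally I would assemble the frequencies, the only delicate count being the ratio locus: for each of the $p-1$ units $\beta_2$ there are precisely $r$ admissible $\beta_1$, namely $\beta_1=k\beta_2$ with $k=1,\ldots,r$ (all nonzero), so after the $p^{m-2}$ free choices of $\beta_3,\ldots,\beta_m$ the on-locus and off-locus strata have frequencies $r(p-1)p^{m-2}$ and $(p-r)(p-1)p^{m-2}$. Combined with the $\beta_1=0$ strata (frequency $p^{m-2}-1$ for the weight $2p^{2m-1}(p-1)$ and frequency $(p-1)p^{m-2}$ for the weight $2p^m(p-1)(p^{m-1}-r)$) and the zero word, I would verify the total on the $\alpha=0$ part through $p^{m-2}\bigl[1+(p-1)(p+1)\bigr]=p^m$, the natural sanity check before tabulating. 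One subtlety I would flag explicitly: the $\beta_1=0,\beta_2\ne0$ weight and the off-locus $\beta_1\ne0$ weight both equal $2p^m(p-1)(p^{m-1}-p+2)$ precisely when $r=p-2$, so at that endpoint two rows merge, whereas for $1\le r\le p-3$ all five nonzero weights are distinct.
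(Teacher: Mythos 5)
Your proposal is correct, and it reproduces Table 3 exactly: your case analysis ($\alpha\neq 0$; then $\alpha=0$ split into $\beta_1=\beta_2=0$, $\beta_1=0,\beta_2\neq 0$, and $\beta_1\neq 0$ off/on the ratio locus $\beta_1\beta_2^{-1}\in\{1,\ldots,r\}$) yields precisely the six weights and frequencies of the table, and your weights $2p^m(p-2)(r+1)$ and $2p^m\bigl[(p-2)(r+1)+1\bigr]=2p^m(pr+p-2r-1)$ for $c_{L^c}$ match the paper's rows four and five after applying Eq. (3.3). The difference from the paper is in how the crux is discharged: the paper's proof is essentially a citation---after handling $\alpha\neq 0$ via Eq. (3.2), it simply quotes \cite[Theorem 4.11]{HKN} for the character sum $\sum_{x\in \Bbb F_p^*}\sum_{t_1\in \Delta}\zeta_p^{\beta\cdot t_1x}$ and invokes Eq. (3.3)---whereas you evaluate that sum from scratch by factoring it as $\sum_{x\in\Bbb F_p^*}\bigl(\sum_{i=0}^{p-2}\zeta_p^{\beta_1 ix}\bigr)\bigl(\sum_{j=0}^{r}\zeta_p^{\beta_2 jx}\bigr)$, using $\sum_{i=0}^{p-2}\zeta_p^{\beta_1 ix}=-\zeta_p^{-\beta_1 x}$ for $\beta_1\neq 0$, and counting solutions of $\beta_2 j=\beta_1$. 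Your route buys self-containedness plus two pieces of information the paper leaves implicit: the frequency sanity check $p^{m-2}\bigl[1+(p-1)(p+1)\bigr]=p^m$ on the $\alpha=0$ stratum, and the observation that at the endpoint $r=p-2$ the weights in rows three and four of Table 3 coincide, so the code is genuinely five-weight (excluding zero) only for $1\le r\le p-3$; the paper's table silently lists merging rows in that boundary case. The paper's route buys brevity and uniformity, since Theorems 3.3 and 3.4 are both settled by pointing to the corresponding computations in \cite{HKN} rather than repeating the geometric-sum manipulations already displayed in Theorems 3.1 and 3.2.
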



\begin{table}[h]
\caption{Lee weight distribution of the code $\mathcal{C}_{L}$  in Theorem 3.3}   
\begin{tabu} to 0.8\textwidth{X[2,c]|X[1.4,c]}  
\hline 
\rm{Lee Weight}&\rm{Frequency}\\ 
\hline
$0$&$1$\\ 
\hline
$2p^{2m-1}(p-1)$&$p^{m-2}-1$\\ 
\hline
$2p^m(p-1)(p^{m-1}-r)$& $p^{m-2}(p-1)$ \\ 
\hline
$2p^{2m-1}(p-1)-2p^{m}(p-2)(r+1)$& $p^{m-2}(p-1)(p-r)$ \\ 
\hline
$2p^{2m-1}(p-1)-2p^{m}(pr+p-2r-1)$& $p^{m-2}r(p-1)$ \\ 
\hline
$2p^{m-1}(p-1)(p^m-(p-1)(r+1))$& $p^{m}(p^m-1)$ \\ 
\hline
\end{tabu}  
\end{table}

\begin{proof} It is easy to check that the length of the code $\mathcal{C}_{L}$ is  $|L|=p^{m}(p^m-(p-1)(r+1))$. Let $\mathbf{a}=\alpha+u\beta$ for $\alpha=(\alpha_1,\ldots,\alpha_m)$ and $\beta=(\beta_1,\ldots,\beta_m)\in\mathbb{F}^m_p$.
 If $\alpha\neq 0$, then $c_{L}(\mathbf{a})=2|L|(1-\frac 1p)=2p^{m-1}(p-1)(p^m-(p-1)(r+1))$. 
 Note that $\sum_{x\in \Bbb F_p^*}\sum_{t_1\in \Delta}\zeta_p^{\beta{t_1}x}$ has been determined in \cite[Theorem 4.11]{HKN}. Table 3 follows from Eq. (3.3). 
\end{proof}

\begin{thm} Let $m\ge 3$ be an integer and $p$ be an odd prime number. Let $\Delta=\langle (p-3, r, 0,\ldots, 0) \rangle$ be a down set of $\Bbb F_p$ for $r=1, \ldots, p-2$.  Then the code  $\mathcal{C}_{L}$ has length $p^{m}(p^m-(p-2)(r+1))$, size $p^{2m}$, and its Lee weight distribution is given by Table 4.



\end{thm}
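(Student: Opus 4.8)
The plan is to mimic the proof of Theorem 3.3, now with generator $(p-3,r,0,\ldots,0)$, so that $\Delta=\{(i,j,0,\ldots,0):0\le i\le p-3,\ 0\le j\le r\}$ is a combinatorial box with $|\Delta|=(p-2)(r+1)$. First I would record the length $|L|=p^m|\Delta^c|=p^m(p^m-(p-2)(r+1))$, the size being $p^{2m}$ because all weights below are read off as $\mathbf a=\alpha+u\beta$ ranges over $\mathbb{F}_p^m+u\mathbb{F}_p^m$. For $\alpha\neq 0$ the inner sum $\sum_{y}\zeta_p^{\alpha\cdot xy}$ vanishes for every $x\in\mathbb{F}_p^*$, so by (3.2) there is a single generic weight $c_L(\mathbf a)=2|L|(1-\tfrac1p)=2p^{m-1}(p-1)(p^m-(p-2)(r+1))$ occurring $p^m(p^m-1)$ times.

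Everything then reduces to $\alpha=0$. Here $\alpha+\beta=\beta$ collapses the two bracketed terms in (3.1) into twice the single sum $S=\sum_{x\in\mathbb{F}_p^*}\sum_{t_1\in\Delta}\zeta_p^{\beta\cdot t_1 x}$, giving $c_{L^c}(\mathbf a)=2p^{m-1}(p-1)(p-2)(r+1)-2p^{m-1}S$, from which (3.3) returns $c_L(\mathbf a)=2p^{2m-1}(p-1)-c_{L^c}(\mathbf a)$. Since $\Delta$ involves only the first two coordinates, $S$ factors as $\sum_{x\in\mathbb{F}_p^*}\bigl(\sum_{i=0}^{p-3}\zeta_p^{\beta_1 i x}\bigr)\bigl(\sum_{j=0}^{r}\zeta_p^{\beta_2 j x}\bigr)$, and the three easy cases are settled by $\sum_{x\in\mathbb{F}_p^*}\zeta_p^{cx}\in\{p-1,-1\}$: one gets the weights $2p^{2m-1}(p-1)$ (when $\beta_1=\beta_2=0$, $\beta\neq0$), $2p^{2m-1}(p-1)-2p^m(p-2)r$ (when $\beta_1=0\neq\beta_2$), and $2p^{2m-1}(p-1)-2p^m(p-3)(r+1)$ (when $\beta_1\neq0=\beta_2$), with the evident counts of $\beta$.

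The crux is $\beta_1\neq0\neq\beta_2$. Because the first coordinate now omits the top \emph{two} residues, $\sum_{i=0}^{p-3}\zeta_p^{\beta_1 i x}=-\zeta_p^{-\beta_1 x}-\zeta_p^{-2\beta_1 x}$, and a short computation gives $S=2(r+1)-p(\epsilon_1+\epsilon_2)$, where $\epsilon_2=1$ exactly when $\beta_1\beta_2^{-1}\bmod p\in\{1,\ldots,r\}$ and $\epsilon_1=1$ exactly when $2\beta_1\beta_2^{-1}\bmod p\in\{1,\ldots,r\}$ (both indices are nonzero since $\beta_1\neq0$ and $p$ is odd). Hence the weight here depends only on $\epsilon_1+\epsilon_2\in\{0,1,2\}$; in particular the value $\epsilon_1+\epsilon_2=0$ reproduces the weight from the case $\beta_1\neq0=\beta_2$, so several rows merge.

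The remaining and genuinely new obstacle is the \emph{frequency} count: one must determine how many $t=\beta_1\beta_2^{-1}\in\mathbb{F}_p^*$ have both $t$ and $2t\bmod p$ lying in $\{1,\ldots,r\}$. Writing $t$ as the parameter, each admissible $t$ contributes $(p-1)p^{m-2}$ vectors $\mathbf a$, but the count itself is the interaction of the interval $\{1,\ldots,r\}$ with the doubling map modulo $p$, which (unlike the single shift $\beta_1\beta_2^{-1}$ in Theorem 3.3) is sensitive to wrap-around and is the one step I do not expect to dispatch by orthogonality alone. I would supply this count by invoking the exponential-sum evaluation of \cite[Theorem 4.11]{HKN} for the down set $\langle (p-3,r,0,\ldots,0)\rangle$, read the corresponding weights off through (3.3), and finish by checking that all listed frequencies sum to $p^{2m}$.
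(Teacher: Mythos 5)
Your proposal is correct and follows essentially the same route as the paper's proof: compute the length, dispose of the case $\alpha\neq 0$ by orthogonality to get the generic weight $2|L|(1-\tfrac1p)$ with frequency $p^m(p^m-1)$, reduce the case $\alpha=0$ via Eqs.~(3.1) and (3.3) to the character sum $\sum_{x\in\mathbb{F}_p^*}\sum_{t_1\in\Delta}\zeta_p^{\beta\cdot t_1x}$ over the down set, and defer its final (combinatorial) evaluation to \cite{HKN}. The only needed correction is the citation: for $\Delta=\langle(p-3,r,0,\ldots,0)\rangle$ the paper invokes \cite[Theorem 4.14]{HKN} (Theorem 4.11 there is the one used for $\langle(p-2,r,0,\ldots,0)\rangle$ in Theorem 3.3); otherwise your factorization of the sum, the identity $S=2(r+1)-p(\epsilon_1+\epsilon_2)$, and the observation that the $\epsilon_1+\epsilon_2=0$ case merges with the $\beta_1\neq0=\beta_2$ row are all consistent with, and more detailed than, what the paper writes.
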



\begin{table}[h]
\caption{Lee weight distribution of the code $\mathcal{C}_{L}$  in Theorem 3.4}   
\begin{tabu} to 0.9\textwidth{X[2,c]|X[1.4,c]}  
\hline 
\rm{Lee Weight}&\rm{Frequency}\\ 
\hline
$0$&$1$\\ 
\hline
$2p^{2m-1}(p-1)$&$p^{m-2}-1$\\ 
\hline
$2p^{2m-1}(p-1)-2p^m(p-2)r$& $p^{m-2}(p-1)$ \\ 
\hline
$2p^{2m-1}(p-1)-2p^{m}(p-3)(r+1)$& $p^{m-2}(p-1)(\frac{p+1}{2}-r+\lfloor \frac r 2\rfloor)$ \\ 
\hline
$2p^{2m-1}(p-1)-2p^{m}(pr+p-3r-2)$& $p^{m-2}(p-1)(p-1-2\lfloor \frac  r 2\rfloor)$ \\ 
\hline
$2p^{2m-1}(p-1)-2p^{m}(pr+p-3r-1)$& $p^{m-2}(p-1)(\lfloor \frac  r 2\rfloor+r-\frac{p-1}{2})$ \\ 
\hline
$2p^{m-1}(p-1)(p^m-(p-2)(r+1))$& $p^{m}(p^m-1)$ \\ 
\hline
\end{tabu}  
\end{table}

\begin{proof} It is easy to check that the length  of the code $\mathcal{C}_{L}$ is   $|L|=p^{m}(p^m-(p-2)(r+1))$. Let $\mathbf{a}=\alpha+u\beta$ for $\alpha=(\alpha_1,\ldots,\alpha_m)$ and $\beta=(\beta_1,\ldots,\beta_m)\in\mathbb{F}^m_p$.
If $\alpha\neq 0$, then  $c_{L}(\mathbf{a})=2|L|(1-\frac 1p)=2p^{m-1}(p-1)(p^m-(p-2)(r+1))$.
Note that $\sum_{x\in \Bbb F_p^*}\sum_{t_1\in \Delta}\zeta_p^{\beta{t_1}x}$ has been determined in \cite[Theorem 4.14]{HKN}.  Table 4 follows from Eq. (3.3). 
\end{proof}


\section{Optimal codes and examples}

Recall that the Gray map $\phi$  is an isometry from $(R^m, d_L)$ to 
$(\Bbb F_p^{2m}, d_H)$.



\begin{thm}
Let $m\ge 3$ be an integer and $p$ be an odd prime number. 
Let $L=\Delta^c+u\mathbb{F}^m_p,u^2=0$ for $\Delta=\langle(r,0,\ldots,0)\rangle$, $r=1,2,\ldots,p-1$.
Then the Gray image $\phi(\mathcal{C}_{L})$ of $\mathcal{C}_{L}$ is a distance optimal linear code and the minimum distance of $\phi(\mathcal{C}_{L})^{\perp}$ is two.
In particular, if $p<2r+2$ and $2(r+1)(p-1)<p^2$, i.e., $r=
\frac{p-1}{2}$, then  $\phi(\mathcal{C}_{L})$ meets the Griesmer bound. 

\end{thm}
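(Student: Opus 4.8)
The plan is to read the parameters of $\phi(\mathcal{C}_L)$ off Theorem 3.1 and then settle the three assertions --- distance optimality, $d(\phi(\mathcal{C}_L)^{\perp})=2$, and the Griesmer case --- in that order. Since $\phi$ is an isometry from $(R^m,d_L)$ to $(\mathbb{F}_p^{2m},d_H)$, the Hamming weight distribution of $\phi(\mathcal{C}_L)$ is exactly Table 1. Writing $N=p^m-r-1$, the three nonzero weights are $w_1=2p^{2m-1}(p-1)$, $w_2=2p^m(p^m-p^{m-1}-r)$ and $w_3=2p^{m-1}(p-1)N$, and direct subtraction gives $w_1-w_3=2p^{m-1}(p-1)(r+1)>0$ and $w_2-w_3=2p^{m-1}(p-r-1)\ge 0$; hence $d=w_3$. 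As the frequencies in Table 1 sum to $p^{2m}$, the Gray image is an $[n,k,d]=[2p^mN,\,2m,\,2p^{m-1}(p-1)N]$ linear code over $\mathbb{F}_p$.

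\textbf{Dual distance.} From the displayed form of $\phi(\mathcal{C}_L)$ I would write a generator matrix with message space $(\mathbf{a},\mathbf{b})\in\mathbb{F}_p^{2m}$: the coordinate indexed by $(\mathbf{c},\mathbf{d})$ with $\mathbf{c}\in\Delta^c$, $\mathbf{d}\in\mathbb{F}_p^m$ contributes the two columns $(\mathbf{d};\mathbf{c})$ and $(\mathbf{d}+\mathbf{c};\mathbf{c})$ of $\mathbb{F}_p^{2m}$ (top block the coefficients of $\mathbf{a}$, bottom block those of $\mathbf{b}$). Recall that $d(\phi(\mathcal{C}_L)^{\perp})$ equals the least number of linearly dependent columns. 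Because $\Delta=\langle(r,0,\ldots,0)\rangle$ contains $\mathbf{0}$, every $\mathbf{c}\in\Delta^c$ is nonzero, so no column vanishes and $d(\phi(\mathcal{C}_L)^{\perp})\ge 2$. Taking $\mathbf{d}=\mathbf{0}$ together with the two admissible vectors $e_2=(0,1,0,\ldots,0)$ and $\lambda e_2$ (both lying in $\Delta^c$, for any $\lambda\in\mathbb{F}_p^*\setminus\{1\}$) yields two distinct proportional columns $(\mathbf{0};e_2)$ and $(\mathbf{0};\lambda e_2)$, so $d(\phi(\mathcal{C}_L)^{\perp})\le 2$, and equality holds.

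\textbf{Distance optimality.} Write $g(k,d)=\sum_{i=0}^{k-1}\lceil d/p^i\rceil$. I would prove $g(2m,d+1)>n$, which excludes any $[n,2m,d+1]$ code. The terms of $g(2m,d)$ with $0\le i\le m-1$ are integers summing to $2N(p^m-1)$, while those with $m\le i\le 2m-1$ contribute $S=\sum_{j=1}^m\lceil 2(p-1)N/p^j\rceil$; dropping the ceilings gives $S\ge 2N(1-p^{-m})>2N-2$, so $S\ge 2N-1$ and hence $g(2m,d)\ge 2Np^m-1=n-1$. Since $\lceil(d+1)/p^i\rceil-\lceil d/p^i\rceil=1$ precisely when $p^i\mid d$, and the $p$-adic valuation $v_p(d)$ equals $m-1$ for $r<p-1$ and $m$ for $r=p-1$, I obtain $g(2m,d+1)=g(2m,d)+(v_p(d)+1)\ge(n-1)+m>n$ for every admissible $r$ (using $m\ge 2$). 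Thus no $[n,2m,d+1]$ code exists and $\phi(\mathcal{C}_L)$ is distance optimal.

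\textbf{Griesmer bound and the main difficulty.} The stated inequalities $p<2r+2$ and $2(r+1)(p-1)<p^2$ force $r=(p-1)/2$, so it remains to show $g(2m,d)=n$, i.e. $S=2N$. Setting $E=2(p-1)(r+1)$ and using $\lceil 2(p-1)N/p^j\rceil=2(p-1)p^{m-j}-\lfloor E/p^j\rfloor$ for $1\le j\le m$, one finds $S=2(p^m-1)-\sum_{j=1}^m\lfloor E/p^j\rfloor$, so $S=2N$ is equivalent to $\sum_{j=1}^m\lfloor E/p^j\rfloor=2r$. For $r=(p-1)/2$ we have $E=p^2-1$, with $\lfloor E/p\rfloor=p-1$ and $\lfloor E/p^j\rfloor=0$ for $2\le j\le m$ (here $m\ge 3$ is used), giving $p-1=2r$ as required; thus the code meets the Griesmer bound. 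The main obstacle is exactly this Griesmer estimate: the code falls only one unit short in general ($g(2m,d)\in\{n-1,n\}$), so distance optimality rests entirely on the arithmetic observation that $v_p(d)\ge m-1$ makes a unit increase of $d$ raise the Griesmer value by at least $m\ge 2$, comfortably overshooting the available slack.
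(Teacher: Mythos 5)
Your proposal is correct, and it deviates from the paper's proof in two places worth noting. For distance optimality, the paper argues by contradiction: it evaluates the Griesmer sum $\sum_{i=0}^{2m-1}\lceil (d+1)/p^i\rceil$ directly, splitting the range at $i=m$ and manipulating the ceilings termwise to show the sum is at least $n+1$. You instead bound $g(2m,d)\ge n-1$ and then invoke the increment identity $g(2m,d+1)=g(2m,d)+\#\{i\le 2m-1: p^i\mid d\}$ together with $v_p(d)\in\{m-1,m\}$; this is the same underlying arithmetic but packaged more transparently --- it isolates exactly why optimality holds (one unit of slack versus a jump of at least $m$), and it sidesteps the delicate ceiling estimates where the paper in fact has a minor slip in the case $r=p-1$ (its claimed equality $\lceil 1/p^m+2(r+1)/p\rceil=2$ fails there, though only in a direction that strengthens its inequality). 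For the dual distance, the routes are genuinely different: the paper rules out $d^\perp\ge 3$ via the sphere-packing bound and then excludes weight-one dual words by a coordinate argument, whereas you use the standard characterization of dual distance as the least number of linearly dependent columns of a generator matrix, exhibiting the proportional columns $(\mathbf{0};e_2)$ and $(\mathbf{0};\lambda e_2)$ coming from the distinct coordinates $(e_2,\mathbf{0})$ and $(\lambda e_2,\mathbf{0})$ with $\lambda\ne 0,1$ (this is where $p$ odd and $m\ge 2$ enter). Your argument is more elementary and constructive --- it produces an explicit weight-two dual codeword rather than inferring existence from a counting bound. The Griesmer-attaining case $r=(p-1)/2$ is handled the same way in both: an exact evaluation of $g(2m,d)$, which in your notation reduces to $\sum_{j\ge 1}\lfloor (p^2-1)/p^j\rfloor = p-1 = 2r$, matching the paper's floor computation (which, incidentally, carries a harmless sign typo in its intermediate step).
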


\begin{proof} 
By Theorem 3.1, the code $\phi(\mathcal{C}_{L})$ has the following parameters:
$$n=2p^m(p^m-r-1),~ k=2m,~ d=2p^{m-1}(p-1)(p^m-r-1).$$ 
Firstly, we show that $\phi(\mathcal{C}_{L})$ is distance optimal. 
Assume to the contrary that there is an $[n,k,d+1]$ code. We have that
\begin{eqnarray*}
&&\sum_{i=0}^{2m-1}\bigg\lceil {\frac{2p^{m-1}(p-1)(p^m-r-1)+1}{p^i}} \bigg\rceil\nonumber\\
&=&\sum_{i=0}^{m-1}\bigg\lceil {\frac{2p^{m-1}(p-1)(p^m-r-1)+1}{p^i}} \bigg\rceil\\
&+&\sum_{i=m}^{2m-1}\bigg\lceil {\frac{2p^{2m-1}(p-1)}{p^i}
+\frac{1-2p^{m-1}(p-1)(r+1)}{p^i}} \bigg\rceil\nonumber\\
&=&2(p^m-1)(p^m-r-1)+m+\sum_{i=0}^{m-1}2p^i(p-1)+\sum_{i=1}^{m}\bigg\lceil{\frac{1}{p^{m+i}}-\frac{2(r+1)(p-1)}{p^i}} \bigg\rceil.
\end{eqnarray*}
Since $1\le r\le p-1$, we have $1\le \bigg\lceil {\frac{2(r+1)}{p}} \bigg\rceil\le 2$ and  
$0<\frac{2(r+1)(p-1)}{p^i} <2,~(i=2,3,\ldots,m),$
so that 
\[
\bigg\lceil{\frac{1}{p^{m}}-\frac{2(r+1)(p-1)}{p}} \bigg\rceil
=-2(r+1)+\bigg\lceil{\frac{1}{p^{m}}+\frac{2(r+1)}{p}} \bigg\rceil
=-2(r+1)+2=-2r
\]
and
\[
\bigg\lceil{\frac{1}{p^{m+i}}-\frac{2(r+1)(p-1)}{p^i}} \bigg\rceil\geq-1~(i=2,3,\ldots,m).
\]
It follows that
\[
\sum_{i=0}^{m-1}\bigg\lceil{\frac{1}{p^{m+i}}-\frac{2(r+1)(p-1)}{p^i}} \bigg\rceil\geq -2r-(m-1),
\]
and so
\begin{multline*}\sum_{i=0}^{2m-1}\bigg\lceil {\frac{2p^{m-1}(p-1)(p^m-r-1)+1}{p^i}} \bigg\rceil \ge 2(p^m-1)(p^m-r-1)+m+2(p^m-1)\\-2r-(m-1)=2p^m(p^m-r-1)+1,
\end{multline*}
which contracts to the Griesmer bound. 

Secondly, we show that the minimum distance $d^{\perp}$ of $\phi(\mathcal{C}_L)^{\perp}$ is two. Assume that $d^{\perp}\geq3$. By the sphere packing bound and $r\leq p-1$, we have 
\[
p^{2m}\geq|\phi(\mathcal{C}_L)^{\perp}|(1+2p^m(p^m-r-1)(p-1))
>2p^m(p^m-(p-1)-1)(p-1),
\]
equivalently, $(p^{m-1}-1)(2p-1)<1,$
which is a contradiction. We now claim that there is no codeword in $\phi(\mathcal{C}_L)^{\perp}$ whose Hamming weight is one. Assume to the contrary that there is a codeword in $\phi(\mathcal{C}_L)^{\perp}$ whose Hamming weight is one. Then the $i$th coordinate position of any codeword in $\phi(\mathcal{C}_L)$ is the zero for some $i$, and so for fixed $\mathbf{c}\in\Delta^c$ and $d\in\mathbb{F}^m_p$, we have either $\mathbf{a}\cdot\mathbf{d}+\mathbf{b}\cdot\mathbf{c}=0$ or $\mathbf{a}\cdot\mathbf{d}+\mathbf{b}\cdot\mathbf{c}
+\mathbf{a}\cdot\mathbf{c}=0$ for all $\mathbf{a},\mathbf{b}\in\mathbb{F}^m_p$. 
In any case, we derive that $\mathbf{c}=\mathbf{0}$, which is a contradiction with $\mathbf{c}\in\Delta^c$.

It remains to prove the third part.
We see that 
\begin{eqnarray*}
&&\sum_{i=0}^{2m-1}\bigg\lceil {\frac{2p^{m-1}(p-1)(p^m-r-1)}{p^i}} \bigg\rceil\nonumber\\
&=&\sum_{i=0}^{m-1}\bigg\lceil {\frac{2p^{m-1}(p-1)(p^m-r-1)}{p^i}} \bigg\rceil+\sum_{i=m}^{2m-1}\bigg\lceil {\frac{2p^{2m-1}(p-1)}{p^i}-\frac{2p^{m-1}(p-1)(r+1)}{p^i}} \bigg\rceil\nonumber\\
&=&2(p^m-1)(p^m-r-1)+\sum_{i=0}^{m-1}2p^i(p-1)-\sum_{i=1}^m\bigg\lfloor{\frac{2(r+1)(p-1)}{p^i}} \bigg\rfloor.
\end{eqnarray*}
 Since $p<2r+2$ and $2(r+1)(p-1)<p^2$, we have that 
 $$ \bigg\lfloor {\frac{2(r+1)(p-1)}{p}} \bigg\rfloor=
 \bigg\lfloor {2(r+1)-\frac{2(r+1)}{p}} \bigg\rfloor
 =2(r+1)-\bigg\lceil\frac{2(r+1)}{p}\bigg\rceil=2(r+1)-2$$ and 
 $$\bigg\lfloor {\frac{2(r+1)(p-1)}{p^i}}\bigg\rfloor=0~ (i=2,,3,\ldots,m).$$
It follows that 
\[
\sum_{i=1}^m\bigg\lfloor{\frac{2(r+1)(p-1)}{p^i}} \bigg\rfloor
=(-2(r+1)+2)+0=-2r,
\]
and so
\begin{multline*}
\sum_{i=0}^{2m-1}\bigg\lceil {\frac{2p^{m-1}(p-1)(p^m-r-1)}{p^i}}
\bigg\rceil=2(p^m-1)(p^m-r-1)+2(p^m-1)-2r\\
=2p^m(p^m-r-1),
\end{multline*}
which shows that if $p<2r+2$ and $2(r+1)(p-1)<p^2$, then
$\phi(\mathcal{C}_{L})$ meets the Griesmer bound. This completes the proof.
\end{proof}

We have the following theorem in a similar computation of Theorem $4.1$ with
$n=2p^{m}(p^m-p(r+1)),~k=2m,~d=2p^m(p-1)(p^m-r-1)$.

\begin{thm}
Let $m\ge 3$ be an integer and $p$ be an odd prime number. 
Let $L=\Delta^c+u\mathbb{F}^m_p,u^2=0$ for $\Delta=\langle(p-1,r,0,\ldots,0)\rangle$, $r=1,2,\ldots,p-1$.
Then the Gray image $\phi(\mathcal{C}_{L})$ of $\mathcal{C}_{L}$ is a distance optimal linear code and the minimum distance of $\phi(\mathcal{C}_{L})^{\perp}$ is two.
In particular, if $p<2r+2$ and $2(r+1)(p-1)<p^2$, i.e., $r=
\frac{p-1}{2}$, then  $\phi(\mathcal{C}_{L})$ meets the Griesmer bound. 
\end{thm}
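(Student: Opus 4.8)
The plan is to run the three-part argument of Theorem 4.1 verbatim, feeding in the parameters furnished by Theorem 3.2 rather than Theorem 3.1. Since $\mathcal{C}_L$ is an $\mathbb{F}_p$-submodule of $R^m$ of size $p^{2m}$ and $\phi$ is an $\mathbb{F}_p$-linear isometry, the Gray image $\phi(\mathcal{C}_L)$ is a $p$-ary \emph{linear} code, and reading off the least nonzero entry of Table 2 it has parameters
\[
n=2p^m(p^m-p(r+1)),\qquad k=2m,\qquad d=2p^m(p-1)(p^{m-1}-r-1).
\]
(That $d$ is the minimum follows because it falls short of $2p^{2m-1}(p-1)$ by $2p^m(p-1)(r+1)$ and short of $2p^{m+1}(p^{m-1}-p^{m-2}-r)$ by $2p^m(p-r-1)\ge0$ when $1\le r\le p-1$.) I record the convenient form $d=2p^{2m-1}(p-1)-2p^m(p-1)(r+1)$, together with the identity $2(p^{m+1}-1)(p^{m-1}-r-1)+2(p^{m-1}-1)-2r=n$, to which both Griesmer computations will collapse.

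First I would prove distance optimality by assuming an $[n,k,d+1]$ code exists and contradicting the Griesmer bound through a lower estimate of $\sum_{i=0}^{2m-1}\lceil(d+1)/p^i\rceil$. The offset $2p^m(p-1)(r+1)$ carries one more power of $p$ than in Theorem 4.1, so $d/p^i$ is an integer for every $0\le i\le m$ and I would split the sum at $i=m$. On $0\le i\le m$ each ceiling equals $d/p^i+1$, so this block sums to $2(p^{m+1}-1)(p^{m-1}-r-1)+(m+1)$. On $m+1\le i\le 2m-1$ I separate the integer part $2p^{2m-1-i}(p-1)$, summing to $2(p^{m-1}-1)$, from the fractional ceilings $\lceil p^{-(m+k)}-2(p-1)(r+1)p^{-k}\rceil$ with $k=i-m$. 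Bounding the $k=1$ ceiling below by $-2r-1$ and the remaining $m-2$ ceilings below by $-1$ gives a tail $\ge-2r-(m-1)$, so the full sum is $\ge n+2>n$, the desired contradiction.

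Next I would identify the dual distance, exactly as in Theorem 4.1. The sphere-packing bound applied to $\phi(\mathcal{C}_L)^\perp$ (which has $p^{n-2m}$ words) shows $d^\perp\ge3$ would force $1+n(p-1)\le p^{2m}$; but $r\le p-1$ gives $p^m-p(r+1)\ge p^m-p^2>0$, whence $n(p-1)\ge2p^{m+2}(p^{m-2}-1)(p-1)>p^{2m}$ for $m\ge3$, a contradiction, so $d^\perp\le2$. A weight-one word of $\phi(\mathcal{C}_L)^\perp$ would make some coordinate $(\mathbf{c},\mathbf{d})$, $\mathbf{c}\in\Delta^c$, vanish on all of $\phi(\mathcal{C}_L)$, forcing one of $\mathbf{a}\cdot\mathbf{d}+\mathbf{b}\cdot\mathbf{c}$ or $\mathbf{a}\cdot\mathbf{d}+\mathbf{b}\cdot\mathbf{c}+\mathbf{a}\cdot\mathbf{c}$ to be identically zero in $\mathbf{a},\mathbf{b}$; taking $\mathbf{a}=\mathbf{0}$ gives $\mathbf{b}\cdot\mathbf{c}=0$ for all $\mathbf{b}$, so $\mathbf{c}=\mathbf{0}\in\Delta$, contradicting $\mathbf{c}\in\Delta^c$. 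Hence $d^\perp=2$.

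Finally, for the Griesmer equality I specialize to $r=\frac{p-1}{2}$, the unique admissible value obeying both $p<2r+2$ and $2(r+1)(p-1)<p^2$, and evaluate $\sum_{i=0}^{2m-1}\lceil d/p^i\rceil$ exactly. Splitting again at $i=m$, the integer contributions total $2(p^{m+1}-1)(p^{m-1}-r-1)+2(p^{m-1}-1)$, while the fractional tail is $-\sum_{k=1}^{m-1}\lfloor2(r+1)(p-1)/p^k\rfloor$. The condition $p<2r+2$ yields $\lfloor2(r+1)(p-1)/p\rfloor=2(r+1)-\lceil2(r+1)/p\rceil=2r$, and $2(r+1)(p-1)<p^2$ kills the floors for $k\ge2$, so the tail equals $-2r$ and the recorded identity collapses the total to exactly $n$; thus $\phi(\mathcal{C}_L)$ meets the Griesmer bound. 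The main obstacle throughout is precisely this floor/ceiling bookkeeping: the extra factor of $p$ shifts the integrality threshold to $i=m$ and every geometric-series exponent by one, so I must confirm that the two threshold inequalities isolate exactly the $k=1$ term (value $2r$) while annihilating all $k\ge2$, and that the optimality estimate clears $n$ for every $r\in\{1,\ldots,p-1\}$, not only for $r=\frac{p-1}{2}$.
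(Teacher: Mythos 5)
Your proposal is correct and takes essentially the same approach as the paper: the paper proves this theorem only by declaring it ``a similar computation of Theorem 4.1,'' and you carry out exactly that computation (the Griesmer-sum contradiction for distance optimality, sphere packing plus the weight-one argument for $d^{\perp}=2$, and the exact Griesmer evaluation at $r=\frac{p-1}{2}$) with the parameters supplied by Table 2. Note also that your minimum distance $d=2p^m(p-1)(p^{m-1}-r-1)$ is the correct value read from Table 2; the expression $d=2p^m(p-1)(p^m-r-1)$ printed just before the theorem in the paper is a typo, since it would exceed the code length.
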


\begin{rem}
{\rm
We point out that in the database of Grassl \cite{G1}, he provides a complete list of distance optimal $[n,k]$ codes with small lengths when  $p\in\{3,5,7\}$, and we see that most of the distance optimal linear codes are unknown when $n\geq31$ and $k\geq 8$, where the upper bound of $n$
is restrictive and relies on $p$. We have constructed $2(p-1)$ classes of distance optimal linear codes in Theorems 4.1 and 4.2. It is believed that our distance optimal linear codes include new codes although we can not compare with the parameters in Grassl's table \cite{G1} because our code length is large. 

Table 5 is obtained by using Theorem 4.1 and 4.2, and $*$ stands for a linear code attaining the Griesmer bound.
}
\end{rem}

\begin{center}
\begin{table}[h]
\centering \caption{(Distance) optimal linear codes $\phi(\mathcal{C}_L)$ in Theorems 4.1 and 4.2}


{\scriptsize
\[ \begin{array}{|c||c|c|c|c||c|c|c|c|}  \hline
p &  n & k & d & \mbox{Optimality} & n & k & d & \mbox{Optimality}\\
\hline
3 &  1350 & 6 &900& \mbox{Optimal*}  & 12798 & 8 & 8532 & \mbox{Optimal*}   \\
3 & 1296  & 6 & 864 & \mbox{Distance optimal} & 12636  & 8 & 8428 &\mbox{Distance optimal}\\
3 & 1134  & 6 & 756   &\mbox{Optimal*} & 12150 & 8 & 8100&\mbox{Optimal*}   \\
3 & 972   & 6 &  648   &\mbox{Distance optimal} & 11664 & 8 & 7776 &\mbox{Distance optimal}\\
5 & 30750  & 6 & 24600    &\mbox{Distance-optimal}& 778750 & 8 &623000& \mbox{Distance optimal}  \\
5 &  30500 & 6 & 24400  & \mbox{Optimal*}  & 777500 & 8  &622000& \mbox{Optimal*} \\
5  & 30250 & 6 &24200&\mbox{Distance optimal}   & 776250& 8   &621000&\mbox{Distance optimal} \\
5  & 30000 & 6 &24000 & \mbox{Distance optimal}&775000 & 8    &620000&\mbox{Distance optimal} \\
5 & 28750  & 6 & 23000& \mbox{Distance optimal}   &768750 & 8     &615000& \mbox{Distance optimal}\\
5  &27500  & 6 &22000&  \mbox{Optimal*}   & 762500& 8    &610000& \mbox{Optimal*}\\
5 &26250 & 6 &21000&\mbox{Distance optimal} & 756250& 8  &605000& \mbox{Distance optimal} \\
5  & 25000 & 6 &20000& \mbox{Distance optimal}  &750000 & 8 &600000&  \mbox{Distance optimal} \\
7  & 233926 & 6 &200508&\mbox{Distance optimal}   &11519998 & 8  &9874284&  \mbox{Distance optimal}\\
7 & 233240 & 6 &199920&\mbox{Distance optimal}  &11515196 & 8  &9870168&\mbox{Distance optimal} \\
7  & 232554 & 6 &199332& \mbox{Optimal*} & 11510394 & 8 &9866052&  \mbox{Optimal*}  \\
7  &231868  & 6 &198744& \mbox{Distance optimal}  &11505592 & 8 &9861936& \mbox{Distance optimal}\\
7&231182  & 6 &198156& \mbox{Distance optimal}  &11500790 & 8  &9857820&\mbox{Distance optimal} \\
7  &230496 & 6 &197568&\mbox{Distance optimal}   & 11495988& 8 &9853704& \mbox{Distance optimal} \\
7 & 225694 & 6 &193452&\mbox{Distance optimal}    &11462374 & 8  &9824892 & \mbox{Distance optimal} \\
7 & 220892 & 6 &189336&\mbox{Distance optimal}     &11428760 & 8  &9796080& \mbox{Distance optimal}  \\
7 &216090  & 6 &185220 & \mbox{Optimal*}    &11595146 & 8 &9767268&\mbox{Optimal*} \\
7 &211288  & 6 &181104&\mbox{Distance optimal}     & 11361532& 8  &9738456& \mbox{Distance optimal}  \\
7 &206486  & 6 &176988 &\mbox{Distance optimal}     &11327918 & 8 &9709644&\mbox{Distanceoptimal}  \\
7 &210684  & 6 &172872&\mbox{Distance optimal}     & 11294304& 8  &9680832&\mbox{Distance optimal}   \\
\hline
\end{array} \]
}
\end{table}
\end{center}

The following are two numeral examples.

\begin{exa}
{\rm
Let $m=3, p=3$, and $r=1$ in Theorem 3.3. 
Then  $\phi(\mathcal{C}_{L})$ is a four-weight ternary code of  parameters $[1242, 6, 810]$ with the weight enumerator $$1+6z^{810}+702z^{828}+18z^{864}+2z^{972}.$$
}
\end{exa}

\begin{exa} 
{\rm
Let $m=3, p=5$, and $r=2$ in Theorem 3.4. 
Then   $\phi(\mathcal{C}_{L})$ is a five-weight pentary code of  parameters $[29000, 6, 23000]$ with the weight enumerator $$1+20z^{23000}+15500z^{23200}+40z^{23250}+60z^{23500}+4z^{25000}.$$
}
\end{exa}

\section{Concluding remarks}
The main contributions of this paper are the following

\begin{itemize}
\item Construction of the linear codes $\mathcal{C}_L$ over $\Bbb F_p+u\Bbb F_p$, where $u^2=0$ and $p$ is an odd prime number, which is defined in Eq. (1.1) associated with down sets;

\item  Determination of the Lee weight distributions of the codes $\mathcal{C}_L$ over $\Bbb F_p+u\Bbb F_p$ in the case that down sets are all generated by a single maximal element (Theorems 3.1, 3.2, 3.3 and 3.4);

\item Some infinite families of  $p$-ary  optimal linear   codes  from the Gray image of the codes $\mathcal{C}_L$ over $\Bbb F_p+u\Bbb F_p$ (Theorems 4.1 and 4.2).
\end{itemize}


Finally, we would inform the reader that in Section 3 we just determined the Lee weight distributions of the codes  $\mathcal{C}_L$ when the down sets are   generated by a single maximal element.
The reader is cordially invited to consider  the general cases.


\bigskip
\section*{Acknowledgments}

The second author was supported by the National Research Foundation of
Korea(NRF) grant funded by the Korea government(MEST)
(NRF-2017R1D1A1B05030707). We express our gratitude to the reviewers for their very helpful comments, which improved the
exposition of this paper.

\end{document}